\documentclass[dvipsnames,11pt]{article}
\usepackage{amsmath,amssymb,amsthm,color,bm,mathrsfs,extarrows,tikz,graphicx,mathtools,enumitem,fancybox,makecell}
\usepackage[square,sort,comma,numbers]{natbib}
\usepackage{subcaption}
\usepackage[ruled]{algorithm2e}
\usepackage[margin=1in]{geometry}
\usepackage{xcolor,bbm}
\usepackage[pagebackref,hypertexnames=false]{hyperref}
\usepackage{comment}
\usepackage{soul}
\usepackage{thm-restate}
\usepackage{ifthen}
\usepackage{mathdots}
\usepackage{enumitem}
\usepackage{float}

\allowdisplaybreaks

\setlist[itemize]{itemsep=0pt}
\setlist[enumerate]{itemsep=0pt}
\hypersetup{colorlinks=true,urlcolor=blue,linkcolor=blue,citecolor=[rgb]{.42,.56,.14},}
\usetikzlibrary{decorations.pathreplacing,decorations.markings,decorations.pathmorphing,decorations.shapes,arrows.meta,positioning,patterns, quantikz2}

\usepackage[capitalise,nameinlink]{cleveref}
\Crefname{lemma}{Lemma}{Lemmas}
\Crefname{fact}{Fact}{Facts}
\Crefname{question}{Question}{Questions}
\Crefname{theorem}{Theorem}{Theorems}
\Crefname{corollary}{Corollary}{Corollaries}
\Crefname{claim}{Claim}{Claims}
\Crefname{example}{Example}{Examples}
\Crefname{problem}{Problem}{Problems}
\Crefname{definition}{Definition}{Definitions}
\Crefname{notation}{Notation}{Notations}
\Crefname{assumption}{Assumption}{Assumptions}
\Crefname{subsection}{Subsection}{Subsections}
\Crefname{section}{Section}{Sections}
\Crefformat{equation}{(#2#1#3)}

\newtheorem{theorem}{Theorem}[section]
\newtheorem*{theorem*}{Theorem}

\newtheorem*{proposition*}{Proposition}

\newtheorem*{property*}{Property}
\newtheorem{lemma}[theorem]{Lemma}
\newtheorem*{lemma*}{Lemma}

\newtheorem*{corollary*}{Corollary}
\newtheorem*{conjecture*}{Conjecture}

\newtheorem*{fact*}{Fact}

\newtheorem*{exercise*}{Exercise}

\newtheorem*{hypothesis*}{Hypothesis}

\theoremstyle{definition}
\newtheorem{definition}[theorem]{Definition}

\newtheorem{exercise-easy}[theorem]{Exercise}
\newtheorem{exercise-med}[theorem]{Exercise}
\newtheorem{exercise-hard}[theorem]{Exercise$^\star$}
\newtheorem{claim}[theorem]{Claim}
\newtheorem*{claim*}{Claim}

\newtheorem{remark}[theorem]{Remark}
\newtheorem*{remark*}{Remark}

\newtheorem*{observation*}{Observation}

\DeclareSymbolFont{extraup}{U}{zavm}{m}{n}
\DeclareMathSymbol{\varheart}{\mathalpha}{extraup}{86}
\DeclareMathSymbol{\vardiamond}{\mathalpha}{extraup}{87}

\DeclareMathOperator*{\E}{\mathbb E}

\renewcommand{\Pr}{\operatorname*{\mathbf{Pr}}}

\newcommand{\eps}{\varepsilon}
\newcommand{\abs}[1]{\left| #1 \right|}
\newcommand{\vabs}[1]{\left\| #1 \right\|}

\newcommand{\pbra}[1]{\left( #1 \right)}
\newcommand{\sbra}[1]{\left[ #1 \right]}
\newcommand{\cbra}[1]{\left\{ #1 \right\}}
\newcommand{\floorbra}[1]{\left\lfloor #1 \right\rfloor}
\newcommand{\ceilbra}[1]{\left\lceil #1 \right\rceil}

\renewcommand{\mid}{\,\middle\vert\,}
 
\newcommand{\Bin}{\mathsf{Bin}}
\newcommand{\bin}{\{0,1\}}

\newcommand{\indicator}{\mathbbm{1}}

\newcommand{\Ecal}{\mathcal{E}}

\newcommand{\Ab}{\mathbf{A}}
\newcommand{\Bb}{\mathbf{B}}
\newcommand{\Cb}{\mathbf{C}}

\newcommand{\Db}{\mathbf{D}}
\newcommand{\Fb}{\mathbf{F}}

\newcommand{\ib}{\mathbf{i}}

\newcommand{\Pb}{\mathbf{P}}

\newcommand{\Qb}{\mathbf{Q}}

\newcommand{\Ub}{\mathbf{U}}
\newcommand{\xb}{\mathbf{x}}
\newcommand{\Xb}{\mathbf{X}}
\newcommand{\Yb}{\mathbf{Y}}
\newcommand{\yb}{\mathbf{y}}
\newcommand{\Zb}{\mathbf{Z}}

\newcommand{\tvdist}[1]{\vabs{#1}_\mathsf{TV}}

\renewcommand{\tilde}{\widetilde}

\title{On the Advantage of Adaptivity for Sampling with Cell Probes}

\author{
Farzan Byramji\thanks{UC San Diego. Email: \href{mailto:fbyramji@ucsd.edu}{\texttt{fbyramji@ucsd.edu}}. Supported by Simons Investigator Award \#929894, and NSF Awards CCF-2425349 and AF: Medium 2212136.}
\and 
Daniel M.~Kane\thanks{UC San Diego. Email: \href{mailto:dakane@ucsd.edu}{\texttt{dakane@ucsd.edu}}. Supported by NSF Medium Award CCF-2107547.}
\and 
Jackson Morris\thanks{UC San Diego. Email: \href{mailto:jrm035@ucsd.edu}{\texttt{jrm035@ucsd.edu}}.}
\and 
Anthony Ostuni\thanks{UC San Diego. Email: \href{mailto:aostuni@ucsd.edu}{\texttt{aostuni@ucsd.edu}}.}
}

\date{}

\begin{document}

\maketitle

\begin{abstract}
    We construct an explicit distribution $\mathbf{D}$ over $\{0,1\}^N$ that exhibits an essentially optimal separation between adaptive and non-adaptive cell-probe sampling.
    The distribution can be sampled exactly when each output bit is allowed two adaptive probes to an arbitrarily long sequence of independent uniform symbols from $[N]$.
    In contrast, any non-adaptive sampler requires $\tilde{\Omega}(N)$ non-adaptive cell probes to generate a distribution with total variation distance less than $1-o(1)$ from $\mathbf{D}$.
    This provides a $2$-vs-$\tilde{\Omega}(N)$ separation for sampling with adaptive versus non-adaptive cell probes, improving upon the $2$-vs-$\tilde{\Omega}(\log N)$ separation of Yu and Zhan (ITCS '24) and the $(\log N)^{O(1)}$-vs-$N^{\Omega(1)}$ separation of Alekseev, G\"o\"os, Myasnikov, Riazanov, and Sokolov (STOC '26).
\end{abstract}

\section{Introduction}
In recent years, the family of computational tasks involving \emph{sampling} has received increased attention, particularly from a complexity-theoretic viewpoint. Broadly, the goal is to take uniform randomness as input and produce samples from some target distribution. Sampling has connections and applications to many other areas of complexity including data structure lower bounds \cite{viola2012complexity, lovett2011bounded, beck2012large, viola2020sampling, chattopadhyay2022space, viola2023new, yu2024sampling, kane2024locality, alekseev2025sampling}, quantum-classical separations \cite{WP26, kane2024locality, GKM+26}, randomness extractors \cite{de2012extractors, viola2014extractors, chattopadhyay2022space, BKMO26}, and more.

One may study the complexity of sampling tasks in various computational models (e.g., circuits, branching programs, bounded-space Turing machines) and with a variety of randomness sources, but in this work we focus on the \emph{cell-probe} model with uniformly random inputs.
In this setting the sampler has query access to arbitrarily many independent samples drawn uniformly from $[N] \coloneqq \{1,2, \dots, N\}$, and outputs are determined by arbitrary functions depending on some subset of inputs queried, i.e., our sampler corresponds to some function $f\colon [N]^m \to \bin^{N}$, and we consider the distribution on $\bin^{N}$ which results from evaluating $f$ on uniformly random input. We write $f(\Ub_{[N]}^m)$ to denote this distribution where $\Ub_{[N]}^m$ is the uniform distribution on $[N]^m$. We say that $f$ is $d$-local if every output bit is a function of at most $d$ input symbols; that is, it can be computed with at most $d$ non-adaptive cell probes.

We also consider an adaptive version of this model wherein the sampler can perform intermediate computations with later queries depending on the results of earlier ones. Indeed, the focus of this work is on understanding the relative power of adaptivity for sampling tasks in this model: 
\begin{center}
    \emph{How much can adaptive queries help to sample in the cell-probe model?}
\end{center}
The challenge of exhibiting a distribution which requires asymptotically fewer adaptive queries than non-adaptive queries was raised by Viola in \cite{viola2020sampling} and reiterated in \cite{viola2023new}. He suggested the following candidate distribution $\Db_V$ on $[N]^N$.
Fix some integer $k = N^{1-\Omega(1)}$, and sample a string $r \in [N]^k$ uniformly at random.
Each output symbol $x_i$ of $\Db_V$ is determined by independently and uniformly sampling an index $j \in [k]$ and setting $x_i = r_j$.
It is easy to see that two adaptive probes are sufficient to sample from this distribution (assuming $k|N$), but a separation was not known until Yu and Zhan \cite{yu2024sampling} proved that $\Omega(\log{N}/\log\log{N})$ non-adaptive probes are necessary to sample from $\Db_V$ (even approximately).

More recently, Alekseev, G\"o\"os, Myasnikov, Riazanov, and Sokolov \cite{alekseev2025sampling} showed that sampling a uniform random permutation in $[N]^N$ also gives a separation between adaptive and non-adaptive cell-probe samplers. They proved that $N^{\Omega(1)}$ non-adaptive probes are required, while Czumaj \cite{Czumaj15} showed that $O(\log^2 N)$ adaptive probes suffice to get polynomially small distance. Alekseev~et~al.~\cite{alekseev2025sampling} also showed that $(\log N)^{\Omega(1)}$ adaptive probes are necessary, and highlighted obtaining an $O(1)$-vs-$N^{\Omega(1)}$ separation as an interesting open question. Weaker lower bounds on the number of non-adaptive probes required to sample permutations had been shown earlier in \cite{viola2020sampling,  yu2024sampling}.

In this work, we give an essentially optimal $O(1)$-vs-$\tilde{\Omega}(N)$ separation.

\begin{theorem}\label{thm:main_cor}
    For any $\eps > 0$ and infinitely many $N$, there exists an explicit distribution $\Db_1$ over $\bin^N$ which can be sampled exactly by two adaptive probes to independent uniform symbols from $[N]$, yet for any $(N^{1-\eps})$-local function $f\colon [N]^m \to \bin^N$, we have
    \begin{equation}\label{eq:opt_error}
        \tvdist{f(\Ub_{[N]}^m) - \Db_1} \ge 1 - \exp(-N^{\Omega_\eps(1)}).
    \end{equation}

    Similarly, there exists a constant $c > 0$ such that for infinitely many $N$, there exists an explicit distribution $\Db_2$ over $\bin^N$ which can be sampled exactly by two adaptive probes to independent uniform symbols from $[N]$, yet for any $(N / \log^c N)$-local function $g\colon [N]^m \to \bin^N$, we have
    \begin{equation}\label{eq:opt_locality}
        \tvdist{g(\Ub_{[N]}^m) - \Db_2} \ge 1 - \exp(-\Omega(\log^2 N)).
    \end{equation}
\end{theorem}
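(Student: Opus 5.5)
Our plan is to construct $\Db_1$ and $\Db_2$ as ``random-pointer'' distributions. They are two-probe samplable by design. The lower bound then comes from combining an encoding of a hard structure with a locality argument.

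\textbf{The distribution.} Split the output $[N]$ into a ``pointer'' block and a ``payload'' block. The first probe $y_1 \in [N]$ is read as a hidden index, or as a short random seed $s$ obtained by interpreting $y_1$ as an element of a smaller set. The second probe then reads a symbol $y_{2}$ whose position is determined by $y_1$. Concretely, each output bit $x_i$ equals $h(i, y_{\phi(i, y_1)})$, where $h$ is a fixed explicit function such as an inner product or a bit extraction. This is Viola's $\Db_V$ idea pushed to the extreme: the shared random string $r$ is hidden behind an index that every output bit must look up. We would choose $\Db$ so that the following holds. For each fixed value of the first probe, the output is a distribution whose support is a code-like set $S_{y_1} \subseteq \bin^N$. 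Distinct $y_1$ give nearly disjoint supports. Two-probe exact sampling is immediate as long as the parameters divide evenly, which is where ``infinitely many $N$'' comes from.

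\textbf{The lower bound.} Let $f$ be $d$-local with $d = N^{1-\eps}$. Call an input symbol \emph{heavy} if more than $N/t$ outputs depend on it. There are at most $dt$ heavy symbols. We would fix (condition on) the heavy symbols. Each remaining input symbol then affects few outputs, so by a standard greedy or graph argument we can find a large set $I$ of output coordinates, of size about $N/(dt \cdot \text{polylog})$, that are fully independent conditioned on the heavy symbols. With $d=N^{1-\eps}$ we take $|I| = N^{\Omega(\eps)}$. In the second regime we take $|I| = \Omega(\log^2 N)$ after choosing $c$ large enough.

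To get TV distance $1 - \exp(-|I|)$ rather than merely a constant, we would show two things about $\Db$.
\begin{enumerate}
\item The number of distinct values of the conditioning (the heavy symbols, taking at most $N^{dt}$ values) is too small to ``name'' the hidden index $y_1$. This holds because $\Db$ needs roughly $N$ bits of shared information among output bits. We would arrange this by letting the hidden string have entropy $\gg dt\log N$.
\item Under $\Db$, any fixed set $I$ of coordinates has its projection concentrated on a set of density $\exp(-\Omega(|I|))$ relative to any product distribution. Conversely, a product distribution on $I$ puts mass at most $\exp(-\Omega(|I|))$ on the support of $\Db|_I$ restricted to any fixed conditioning.
\end{enumerate}
Together, the event ``$x_I$ is consistent with $\Db$ given the structure'' is a distinguisher with the stated advantage. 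A union bound over the at most $N^{dt}$ conditionings is absorbed because $|I|$ dominates $dt\log N$ only in the exponent we pay for. This forces the tradeoff between $t$ and $|I|$, and yields $\exp(-N^{\Omega_\eps(1)})$ and $\exp(-\Omega(\log^2 N))$ respectively.

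\textbf{Main obstacle.} The hard step is the second claim: designing $h,\phi$ so that the projection of $\Db$ onto arbitrary independent-looking coordinates is far from every product distribution with exponentially small overlap, while the locality argument leaves only independence. We would first try making $\Db$ a mixture over $y_1$ of uniform distributions on cosets of a good linear code $C_{y_1}$ (pointer-selected), with $C_{y_1}$ having large dual distance. Then we would bound overlap via the Fourier coefficients of the code's dual on $I$, in the style of Yu--Zhan but with a code of dimension $\gg dt\log N$ replacing their $k$-length string.
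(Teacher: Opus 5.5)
There is a genuine gap. What you call the ``main obstacle'' is essentially the whole theorem, and neither the construction nor the counting argument you do commit to would prove it.

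\textbf{The construction.} Your template hides a single pointer $y_1\in[N]$ shared by all outputs. That pointer carries only $\log N$ bits, so your claim 1 cannot hold: $N^{dt}\ge N$ restrictions are plenty to name it. A single pointer is also risky in itself. Already for the paper's distribution with $\ell=1$, write the shift as $a\sqrt n+b$ and route through an intermediate uniform string $\mathbf{w}$, setting $\xb_u=\mathbf{w}_{u-a\sqrt n}$ and $\yb_k=\mathbf{w}_{k+b}$; this gives an $O(\sqrt n)$-local exact sampler.

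The missing idea is to \emph{write many independent pointers into the output}. The distribution $\Db_{n,\ell}$ consists of $\ell$ uniform shifts $\ib^{(1)},\dots,\ib^{(\ell)}$, a uniform $\xb\in\bin^n$, and copies $\yb^{(j)}_k=\xb_{k+\ib^{(j)}}$, padded to length $N=4\ell n$. Two probes suffice: read the shift cell, then the addressed $\xb$-cell.
\begin{itemize}
\item Because the shifts are visible, conditioning on the set $S$ of cells that touch an index bit or more than $t$ outputs pins down explicit equality blocks $B_u(i)=(\xb_u,\yb^{(1)}_{u-i^{(1)}},\dots,\yb^{(\ell)}_{u-i^{(\ell)}})$.
\item An AM--GM/Markov count shows that all but a $4d\,(t/(\ell n))^\ell$ fraction of shift vectors leave at least $3n/4$ blocks with no common unfixed cell. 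This bound is nontrivial only when $\ell\gtrsim 1/\eps$.
\item A hybrid argument shows each such block is either non-constant with probability $\ge 1/(100(\ell+1))$, or equal to a fixed constant with probability $\ge 2/3$.
\end{itemize}
That dichotomy replaces your property 2, which is false as stated. A point mass on a support point is a product distribution, and for natural $I$ (e.g.\ $\xb$-coordinates) $\Db|_I$ is itself a product. This explicit indexing is what separates the paper from hidden-pointer distributions like $\Db_V$, where the Yu--Zhan-style analysis you invoke only reached $\Omega(\log N/\log\log N)$.

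\textbf{The analysis.} Fully independent coordinates cannot work near linear locality. Conditioning on all cells that touch more than $\tau$ outputs may condition up to $dN/\tau$ cells. A union bound over their restrictions, even against a target probability of $2^{-O(N)}$, then forces $\tau\gtrsim d\log N$. Each output may then conflict with $d\tau$ others, so greedy only guarantees $|I|\approx N/(d\tau)\lesssim N/(d^2\log N)$. That is below $1$ for $d=N^{1-\eps}$ with $\eps<1/2$, and for $d=N/\log^c N$. If the target-side bound is only $\exp(-\Omega(|I|))$, as in your plan, the union bound needs $N/(d\tau)\gg (dN/\tau)\log N$, which fails for every $d$. (With your threshold $N/t$, greedy gives $t/d$, not $N/(dt)$.)

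The paper's footnote notes that greedy independence gives worse bounds. The paper instead uses an asymmetry:
\begin{itemize}
\item On the sampler side no union bound is needed. The read-$t$ Chernoff bound of Gavinsky--Lovett--Saks--Srinivasan, applied over the $3n/8$ uncovered blocks of the majority type, costs only $\exp(-\Omega(n/(\ell t)))$.
\item On the target side, conditioned on the shifts, the blocks are determined by distinct bits $\xb_u$ and so are truly independent. The witnessing events therefore have probability $0$ (many non-constant blocks) or $e^{-\Omega(n)}$ (at least $7/12$ of the blocks hitting their preferred constants).
\end{itemize}
This absorbs the $N^{|S|}$ restrictions once $t=\Theta(\ell d\log N)$. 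Plugging in $\ell\approx n^{\delta/3}$, $d\approx n^{1-\delta}$ with $\delta=3\eps/4$ (resp.\ $\ell\approx\log^2 n$, $d\approx n/\log^7 n$) yields both parts of the theorem.
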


\begin{remark}\label{rmk:optimality}
    These results are essentially the best possible in several ways.
    \begin{itemize}
        \item Any adaptive sampler which only makes one cell probe (per output bit) is trivially equivalent to a non-adaptive sampler, so at least two probes are necessary for a separation.

        \item The error in \Cref{eq:opt_error} cannot be appreciably improved, since any distribution over $\bin^N$ can be approximated to distance $1 - e^{-\Omega(N)}$ by a point mass (which needs no locality to generate).

        \item The locality in \Cref{eq:opt_locality} cannot be appreciably improved, since a $\Theta(N/\log N)$-local sampler can use the probed cells as a shared seed with $2^{\Theta(N)}$ possible values to approximate any distribution over $\bin^N$ with exponentially small error (see, e.g., \cite[Lemma 5.2]{viola2012complexity}).

        \item The alphabet size $s$ of the random symbols must be large for such a separation to hold.
        If $s$ were smaller than any polynomial in $N$, then any sampler making $d = O(1)$ adaptive probes could be simulated by one making $s^{O(d)} \ll N^{\Omega(1)}$ non-adaptive probes by considering all possible queried inputs. 
    \end{itemize}
\end{remark}

The distributions in \Cref{thm:main_cor} are instantiations of the following distribution with different choices of parameters.

\begin{definition}[Distribution $\Db_{n,\ell}$]\label{def:distD}
    Let $n \ge 2$ be a power of two, and let $\ell \ge 1$ be an integer.
    We define $\Db_{n,\ell}$ to be the distribution over $(\ib^{(1)}, \dots, \ib^{(\ell)},\xb,\yb^{(1)}, \dots, \yb^{(\ell)}) \in \bin^{\ell\log(n) + (\ell+1)n}$, where each $\ib^{(j)}$ is independently uniform over $\bin^{\log(n)}$ (viewed as an element of $[n]$), $\xb$ is uniform over $\bin^n$, and $\yb^{(j)}_k = \xb_{k+\ib^{(j)} \bmod n}$ for all $j \in [\ell]$ and $k\in [n]$.
\end{definition}

Note that $\Db_{n,\ell}$ is similar in spirit to the aforementioned distribution $\Db_V$, but with a more explicit indexing structure.
In $\Db_V$, each output samples a private index $j \in [k]$ and outputs $r_j$.
This produces clusters of coordinates which must take the same value, although the output itself does not reveal which coordinates belong to which cluster.
Our method instead explicitly produces many random groups of $\ell+1$ outputs that must correlate and encodes which clusters those must be in the index portion of the output.

As with $\Db_V$, the distribution $\Db_{n,\ell}$ can be sampled with two adaptive cell probes.

\begin{claim}\label{clm:2_query_Dnell}
    Let $n,\ell$ be as in \Cref{def:distD}, and let $N$ be a (positive) integer multiple of $2n$.
    Then $\Db_{n,\ell}$ can be sampled exactly by two adaptive probes to independent uniform symbols from $[N]$.
\end{claim}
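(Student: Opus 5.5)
We construct the adaptive sampler directly. The input is a sequence of independent uniform symbols from $[N]$: one symbol $u^{(j)}$ for each $j\in[\ell]$ and one symbol $v_k$ for each $k\in[n]$. Since $2n \mid N$, reducing a uniform symbol of $[N]$ modulo $n$ gives a uniform element of $[n]$, and reducing it modulo $2$ gives a uniform bit. Both reductions are exact because $n$ and $2$ divide $N$.

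The index portion uses only one probe per output bit. Set $\ib^{(j)} = u^{(j)} \bmod n \in [n]$, written in binary with $\log n$ bits. The bit in position $t$ of $\ib^{(j)}$ is then a function of the single symbol $u^{(j)}$, so each index bit is computed with one probe.

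The $\xb$ portion also uses one probe per bit. Set $\xb_k = v_k \bmod 2$, so each $\xb_k$ is a function of $v_k$ alone.

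The $\yb$ portion is where adaptivity enters. The bit $\yb^{(j)}_k$ is computed as follows:
\begin{enumerate}
\item Probe $u^{(j)}$ and compute $a = k + (u^{(j)} \bmod n) \bmod n$.
\item Probe $v_a$ and output $v_a \bmod 2$.
\end{enumerate}
This equals $\xb_{k+\ib^{(j)} \bmod n}$ by construction, and each such bit uses two adaptive probes.

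It remains to check that the output is distributed exactly as $\Db_{n,\ell}$. The symbols $u^{(1)},\dots,u^{(\ell)},v_1,\dots,v_n$ are independent. Hence the values $u^{(j)} \bmod n$ are independent and uniform on $[n]$, the bits $v_k \bmod 2$ are independent and uniform, and these two families are independent of each other. So $(\ib^{(1)},\dots,\ib^{(\ell)},\xb)$ has exactly the product distribution required by \Cref{def:distD}. The $\yb^{(j)}$ are deterministic functions of $(\ib,\xb)$ that satisfy the defining relation $\yb^{(j)}_k = \xb_{k+\ib^{(j)} \bmod n}$. Therefore the joint distribution matches $\Db_{n,\ell}$ exactly.

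The only point needing care is exactness of the reductions, which is why we require $2n \mid N$ (both $n \mid N$ and $2 \mid N$). The remaining details are bookkeeping: fixing the binary encoding of $[n]$ and the indexing convention for $k+\ib^{(j)} \bmod n$.
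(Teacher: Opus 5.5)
Your proposal is correct and follows the paper's proof essentially verbatim: one uniform symbol per shift $\ib^{(j)}$ and one per bit $\xb_k$, mapped by balanced maps (your reductions mod $n$ and mod $2$, which are balanced exactly because $2n \mid N$), with $\yb^{(j)}_k$ computed by first probing the shift symbol and then probing the symbol that generates $\xb_{k+\ib^{(j)} \bmod n}$. Your explicit check that $(\ib^{(1)},\dots,\ib^{(\ell)},\xb)$ is a uniform product and that each $\yb^{(j)}$ is the required deterministic function of it spells out the correctness argument the paper leaves implicit.
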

\begin{proof}
    Let the first $\ell + n$ random input symbols be $S_1,\dots, S_\ell, R_1,\dots, R_n \in [N]$.
    For each $j \in [\ell]$, use $S_j$ to generate the shift $\ib^{(j)}$ using some balanced mapping from $[N] \to [n]$.
    Similarly for each $k \in [n]$, use $R_k$ to generate the bit $\xb_k$ using some balanced mapping from $[N] \to \bin$.
    (Since $(2n) | N$, both mappings exist.)
    To produce $\yb^{(j)}_k$, first probe $S_j$ to learn the shift $\ib^{(j)}$, then probe $R_{k + \ib^{(j)} \bmod n}$ and output the corresponding bit.
\end{proof}

Our main technical result is the following lower bound, which shows that $\Db_{n,\ell}$ is hard to sample non-adaptively.

\begin{restatable}{theorem}{thmmainlb}\label{thm:main-lb}
    Let $m \ge d \ge 1$, $\ell \ge 1$, and $n, N \ge 2$ be integers where $n$ is a power of two.
    Additionally, let $f\colon [N]^m \to \bin^{\ell\log(n) +(\ell+1)n}$ be a $d$-local function, and let $\Db_{n,\ell}$ be the distribution in \Cref{def:distD}.
    Then $f(\Ub_{[N]}^m)$ and $\Db_{n, \ell}$ have total variation distance at least
    \[
        1 - \exp\pbra{-\Omega\pbra{\frac{n}{\ell^2 d \log N}}} - d\left(\frac{\Theta(d\log N)}{n}\right)^\ell - \exp\big(d\ell \log(n)\log(N) -\Omega(n)\big).
    \]
\end{restatable}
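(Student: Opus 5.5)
We will exhibit, for a given $d$-local $f$, a test event $T\subseteq\bin^{\ell\log(n)+(\ell+1)n}$ with $\Pr[\Db_{n,\ell}\in T]=1$ and $\Pr[f(\Ub_{[N]}^m)\in T]$ bounded by the three error terms. The natural test is the support of $\Db_{n,\ell}$: every $\yb^{(j)}$ must be the cyclic shift of $\xb$ by $\ib^{(j)}$. Showing that a local sampler rarely lands in the support will use the following structure. Each index block $\ib^{(j)}$ consists of $\log n$ output bits, so it depends on at most $d\log n$ input cells. All $\ell$ index blocks together therefore depend on a set $Q$ of at most $d\ell\log n$ input cells. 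We condition on the values of $Q$, which gives at most $N^{d\ell\log n}$ possibilities. This conditioning is the source of the third error term, $\exp(d\ell\log(n)\log(N)-\Omega(n))$.

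The first case is when the conditioning fixes the shifts and the sampler is close to $\Db$. After fixing $Q$, the shifts $\ib^{(j)}$ are determined, while under $\Db$ the string $\xb$ is still uniform on $\bin^n$ with min-entropy $n$. If $f$ hits the support, then $\xb$ is determined by $\yb^{(j)}$ and vice versa. Note that $\Db$ places mass $n^{-\ell}2^{-n}$ on each support point. A union bound over the $N^{|Q|}$ conditionings, combined with a counting argument showing that a restricted sampler whose $\xb$ output has entropy deficiency $\Omega(n)$ captures little mass, will yield the $\exp(d\ell\log n\log N-\Omega(n))$ term. Concretely, we will argue that $\Pr[f\in\mathrm{supp}]$ is at most $\sum_{q}\Pr[Q=q]\Pr[\text{consistent}\mid q]$, and split according to whether the restricted $\xb$ is "spread".

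The main obstacle is to show that a $d$-local sampler with fixed shifts cannot reproduce the correlation $\yb^{(j)}_k=\xb_{k+\ib^{(j)}}$ while keeping $\xb$ nearly uniform. The shifts must themselves look uniform, yet the sampler must correlate output $\yb^{(j)}_k$ with $\xb_{k+i}$ for the $i$ that is realized. A fixed input cell influences at most $d$ bits of $\xb$ along with its correlated partners. We plan the following two-part argument.

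First, the shift distribution is the part that can fail. With probability about $d(\Theta(d\log N)/n)^\ell$, the sampled shifts could align with a structure the sampler has "hard-wired". We bound this by noting that the input cells read by $\yb^{(j)}_k$ (at most $d$ of them) can each be shared with at most $d$ coordinates of $\xb$. For uniformly random shifts, the chance that all $\ell$ shifts place $\yb^{(j)}_k$'s partner $\xb_{k+\ib^{(j)}}$ among the $O(d\log N)$ "heavily shared" coordinates is $(O(d\log N)/n)^\ell$ per cell. A union over $d$ cells gives the second term.

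Second, outside that event, for many $k$ (a set of $\Omega(n/(\ell^2 d\log N))$ indices, chosen greedily so that their dependency neighborhoods are disjoint) the bit $\yb^{(j)}_k$ shares no input with $\xb_{k+\ib^{(j)}}$ and is thus independent of it. These events are independent across the chosen $k$, and each agreement succeeds with probability at most $1-\Omega(1)$ given that $\xb$ is near uniform; otherwise we fall into the entropy-deficient case above. This gives the first term $\exp(-\Omega(n/(\ell^2 d\log N)))$. The delicate step is the greedy packing, which must control both the $\ell$ different $\yb$ blocks and the $\log N$ loss from thresholding "heavy" cells. We expect the $\ell^2$ factor to come from that step.
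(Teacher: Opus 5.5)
Conditioning on the at most $d\ell\log n$ cells that feed the shift bits, and paying $N^{d\ell\log n}$ in a union bound, is the right start, but the test you build on it cannot work. Any $T$ with $\Pr[\Db_{n,\ell}\in T]=1$ contains the support of $\Db_{n,\ell}$. A $1$-local sampler can land in that support with probability $1$ while being $(1-n^{-\ell})$-far from $\Db_{n,\ell}$: fix all shifts to a constant, take $\xb$ uniform, and let each $\yb^{(j)}_k$ read the same cell as its partner $\xb_{k+\ib^{(j)}}$. Even the all-zero output lies in the support.

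The underlying error is that you treat the shifts as uniform when bounding the sampler. In fact $f$ chooses its own shifts and can put all its mass on a ``hard-wired'' configuration, so a bound like $d(\Theta(d\log N)/n)^\ell$ holds only for uniform shifts, i.e.\ under $\Db_{n,\ell}$. The paper therefore builds an event $\Ecal$ that is likely under $f$ and unlikely under $\Db_{n,\ell}$, and puts the bad shifts \emph{into} $\Ecal$. It conditions on the shift cells \emph{and} on all cells affecting more than $t=\Theta(\ell d\log N)$ outputs; this adds only $O(n/\log N)$ cells, absorbed in the third term. It calls $i$ bad if at least $n/4$ equality blocks $B_u(i)=(\xb_u,\yb^{(1)}_{u-i^{(1)}},\dots,\yb^{(\ell)}_{u-i^{(\ell)}})$ have a \emph{single} cell affecting all $\ell+1$ bits. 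The bad fraction is bounded by summing $\prod_j|A_{j,c}|/n\le(t/(\ell n))^\ell$ (AM--GM) over the at most $dn$ pairs $(c,u)$ with $c$ affecting $\xb_u$, then applying Markov. Your version of this count (cells of $\yb^{(j)}_k$ ``shared with at most $d$ coordinates of $\xb$'', and a union over cells that differ with $j$) is not a valid bound: locality limits cells per output, not outputs per cell.

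For good shifts, the missing idea is the per-block dichotomy of \Cref{clm:Z_classification}. If no single cell affects all $k=\ell+1$ bits of a block, a hybrid argument shows the block is either non-constant with probability $\ge 1/(100k)$ or equal to a fixed constant with probability $\ge 2/3$. The hybrid re-randomizes the cell classes $I_1,\dots,I_k$ one at a time; consecutive hybrids agree on one coordinate, so they cannot jump from $0^k$ to $1^k$, giving $2p_0p_1\le(k+1)\delta$.

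Your substitute (some $\yb^{(j)}$ sharing no cell with its partner, plus $\xb$ being ``near uniform'') fails on both counts:
\begin{itemize}
\item No common cell does not give a cell-disjoint pair, since each $\yb^{(j)}$ may share a different cell with $\xb_u$. Demanding such a pair turns the bad-shift bound into $(dt/(\ell n))^\ell=(\Theta(d^2\log N)/n)^\ell$, which is vacuous once $d^2\log N\gtrsim n$.
\item Global entropy of $\xb$ does not control the bias of the particular coordinates you use. When it is deficient, it also does not yield a small set carrying mass close to $1$ under $f$: an equal mixture of a point mass and the uniform distribution has deficiency about $n/2$, yet every set of size $2^{(1-\Omega(1))n}$ has mass about $1/2$ at most.
\end{itemize}
The paper handles the biased case with the event that at least $7n'/12$ of $n'=3n/8$ blocks take their favoured constants; under $\Db_{n,\ell}$ given the shift this is a $\Bin(n',1/2)$ tail. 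It handles the other case with ``many non-constant blocks'', which is impossible under $\Db_{n,\ell}$.

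Finally, greedy packing cannot give the first term. A block shares cells with up to $(\ell+1)dt$ others, so you are only guaranteed $\Omega(n/(\ell dt))=\Omega(n/(\ell^2 d^2\log N))$ cell-disjoint blocks, a factor $d$ short of your claim; the paper's footnote notes this. The paper instead applies the read-$k$ Chernoff bound (\Cref{lem:read_k_chernoff}) to the read-$t$ family of block indicators, which gives $\exp(-\Omega(n/(\ell t)))$ directly.
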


Taking \Cref{thm:main-lb} as a black-box, we can immediately derive \Cref{thm:main_cor}.

\begin{proof}[Proof of \Cref{thm:main_cor} (assuming \Cref{thm:main-lb})]
    Both $\Db_1$ and $\Db_2$ will be chosen to be padded versions of $\Db_{n, \ell}$ (for different values of $\ell$), where $n \ge 2$ is a power of 2, $\ell \ge 1$ is an integer, and $N = 4\ell n$.
    Note that under these conditions, \Cref{clm:2_query_Dnell} guarantees we can exactly sample $\Db_{n, \ell}$ with two adaptive cell probes, so the same must also be true when the distribution is padded with zeros.

    In the case of $\Db_1$, we set $\delta = 3\eps/4$, $\ell = \floorbra{n^{\delta/3}}$, and $d = \ceilbra{n^{1-\delta}}$.
    Then \Cref{thm:main-lb} implies that for any $d$-local $f\colon [N]^m \rightarrow \bin^{\ell \log n + (\ell+1)n}$, we have
    \[
        \tvdist{f(\Ub_{[N]}^m) - \Db_{n, \ell}} \ge 1 - \exp(-\widetilde{\Omega}(n^{\delta/3})) \geq 1 - \exp(-\Omega(N^{\eps/6}))
    \]
    for large enough $n$. Since $d \geq N^{1-\eps}$ and $\Db_{n, \ell}$ is a projection of $\Db_1$, the desired lower bound for $\Db_1$ follows.
    Similarly in the case of $\Db_2$, we set $\ell =  \floorbra{\log^2 n}$ and $d = \ceilbra {n/\log^7 n}$. Then for any $d$-local $g$, we have 
    \[
            \tvdist{g(\Ub_{[N]}^m) - \Db_{n, \ell}} \ge 1 - \exp(-\Omega(\log^2 n)) \geq 1 - \exp(-\Omega(\log^2 N))
        \]
    for large enough $n$. Using $d \geq N/\log^{10} N$, we obtain the desired lower bound for $\Db_2$. 
\end{proof}

We will prove \Cref{thm:main-lb} in \Cref{sec:main_pf} after setting up some preliminary material in the subsequent section.

\section{Preliminaries}

In this section, we review some basic notation, terminology, and standard results.
For a positive integer $n$, we use $[n]$ to denote the set $\cbra{1,2,\dots, n}$.
All logarithms given in the paper are base 2.
The indicator function is denoted by $\indicator(\cdot)$.

\paragraph{Asymptotics.}
We use the standard $\Omega(\cdot), O(\cdot), \Theta(\cdot)$ asymptotic notation to hide universal positive constants.
Occasionally, we will use subscripts to indicate an unspecified dependence on a particular parameter (e.g., $\Omega_\eps(1)$).
Additionally, we write $\tilde{\Omega}(\cdot)$ to suppress polylogarithmic factors.

\paragraph{Cell-Probe Samplers.}

We study sampling in the cell-probe model. 
A sampler is given access to an arbitrarily long sequence of independent random cells, each uniformly distributed over $[N]$, and produces an output in $\bin^n$ according to probes made to the random cells.
The complexity of the sampler is the maximum number of cells queried in order to produce any single output bit.

A central distinction is whether these probes are adaptive. 
In an adaptive sampler, the location of a later probe may depend on the values observed in earlier probes. 
In a non-adaptive sampler, all probe locations for a given output bit are fixed in advance.
In the non-adaptive setting, this cost is often called the \emph{locality} of the sampler. 
Thus, a function $f\colon [N]^m \to \bin^n$ is said to be $d$-local if each output bit of $f$ depends on at most $d$ input cells.

\paragraph{Probability.}
We use bold letters to denote probability distributions and random variables.
We reserve $\Ub_{[n]}$ for the uniform distribution over $[n]$.
For an event $\Ecal$, we define $\Xb(\Ecal)$ to be the probability mass assigned to $\Ecal$ by $\Xb$.
For a function $f$, we use $f(\Xb)$ to denote the output distribution of $f(\xb)$ on randomly drawn $\xb\sim \Xb$.

Given a distribution $\Xb$ and positive integer $t$, we use $\Xb^t$ to denote the $t$-fold product distribution $\Xb \times \cdots \times \Xb$.
If $s$ is a finite set, we write $\Xb^s$ to emphasize that the coordinates of $\Xb^{|s|}$ are indexed by $s$.
We refer to $\Xb$ as a mixture if it can be written as a convex combination of other distributions.
That is, there exist $c_1, \dots, c_k \in [0,1]$ with $\sum_i c_i = 1$ and distributions $\Xb_1, \dots, \Xb_k$ such that $\Xb(\Ecal) = \sum_{i=1}^k c_i \cdot \Xb_i(\Ecal)$ for every event $\Ecal$.
Occasionally, we write this more concisely as $\Xb = \sum_{i=1}^k c_i \Xb_i$.
We measure the similarity of two (discrete) distributions $\Pb$ and $\Qb$ by the \emph{total variation (TV) distance} 
\[
    \tvdist{\Pb - \Qb} = \max_{\text{event } \Ecal} \Pb(\Ecal) - \Qb(\Ecal) = \frac{1}{2}\sum_x \abs{\Pb(x) - \Qb(x)}.
\]
We say $\Pb$ is \emph{$\eps$-close} to $\Qb$ if $\tvdist{\Pb - \Qb} \le \eps$, and \emph{$\eps$-far} otherwise.

\paragraph{Concentration Inequalities.}
We will need the following standard concentration inequality.
\begin{lemma}[Chernoff]\label{lem:chernoff}
    Let $\Xb_1, \dots, \Xb_n$ be independent indicator random variables with $\Pr[\Xb_i=1] \ge p$ for all $i$.
    Then for any $\delta \in (0,1)$, we have
    \[
        \Pr\sbra{\sum_i \Xb_i \le (1-\delta) pn} \le \exp\pbra{-\Omega(\delta^2 pn)}.
    \]
\end{lemma}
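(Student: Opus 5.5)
This is the standard multiplicative Chernoff lower-tail bound, and I will prove it directly with the exponential moment method. The only wrinkle relative to the textbook version is that the $\Xb_i$ need not be identically distributed: we only know $\Pr[\Xb_i = 1] = p_i \ge p$. I will absorb this inside the moment generating function calculation rather than through a coupling. Throughout, write $\mathbf{S} = \sum_i \Xb_i$ and $a = (1-\delta)pn$.

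\emph{Step 1 (Markov on the exponential).} For any $t > 0$, the map $s \mapsto e^{-ts}$ is decreasing. Hence Markov's inequality gives
\[
\Pr[\mathbf{S} \le a] = \Pr\sbra{e^{-t\mathbf{S}} \ge e^{-ta}} \le e^{ta}\,\E\sbra{e^{-t\mathbf{S}}} = e^{ta}\prod_{i=1}^n \E\sbra{e^{-t\Xb_i}},
\]
where the last equality uses independence.

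\emph{Step 2 (bounding each factor).} For each $i$ we have $\E[e^{-t\Xb_i}] = 1 - p_i(1-e^{-t})$. Since $1-e^{-t} > 0$ and $p_i \ge p$, this is at most $1 - p(1-e^{-t})$, which is in turn at most $\exp(-p(1-e^{-t}))$ because $1+x \le e^x$. This step is exactly where the assumption $p_i \ge p$ enters. Multiplying the $n$ factors gives
\[
\Pr[\mathbf{S} \le a] \le \exp\pbra{t(1-\delta)pn - pn(1-e^{-t})}.
\]

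\emph{Step 3 (optimizing $t$).} I choose $t = -\ln(1-\delta) > 0$, which is valid since $\delta \in (0,1)$. Then $e^{-t} = 1-\delta$, and the exponent becomes $-pn\, g(\delta)$, where
\[
g(\delta) = \delta + (1-\delta)\ln(1-\delta).
\]

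\emph{Step 4 (the elementary inequality $g(\delta) \ge \delta^2/2$).} This is the only step requiring any care, and it is mild. We have $g(0) = 0$ and $g'(\delta) = -\ln(1-\delta)$. Since $-\ln(1-\delta) \ge \delta$ on $[0,1)$, integrating from $0$ to $\delta$ gives $g(\delta) \ge \delta^2/2$.

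Combining the steps, $\Pr[\mathbf{S} \le (1-\delta)pn] \le \exp(-\delta^2 pn/2) = \exp(-\Omega(\delta^2 pn))$, which is the claimed bound.
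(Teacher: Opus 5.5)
Your proof is correct: it is the standard exponential-moment (Chernoff) argument, and Step 2 handles the non-identical means correctly via $1 - p_i(1-e^{-t}) \le 1 - p(1-e^{-t})$, giving the explicit bound $\exp(-\delta^2 pn/2)$. The paper does not prove this lemma; it cites it as a standard concentration inequality, so your argument supplies the textbook proof that the paper implicitly relies on.
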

We will also require a variant of \Cref{lem:chernoff} in which the random variables are only partially independent.
There are a plethora of standard results giving concentration under these weaker conditions (e.g., McDiarmid's or Azuma's inequalities).
For our purposes, a lesser-known inequality derived from Shearer's lemma (from information theory) by Gavinsky, Lovett, Saks, and Srinivasan \cite{GLSS15} appears to provide the best bounds.
A collection of random variables $\Xb_1, \dots, \Xb_n$ is said to be a \emph{read-$k$ family} if they are functions of independent random variables $\Yb_1, \dots, \Yb_m$, where each $\Yb_j$ influences at most $k$ of the $\Xb_i$'s.
The following is essentially \cite[Theorem 1.1]{GLSS15}.

\begin{lemma}[Read-$k$ Chernoff]\label{lem:read_k_chernoff}
    Let $\Xb_1, \dots, \Xb_n$ be a read-$k$ family of indicator random variables with $\Pr[\Xb_i=1] \ge p$ for all $i$.
    Then for any $\delta \in (0,1)$, we have
    \[
        \Pr\sbra{\sum_i \Xb_i \le (1-\delta) pn} \le \exp\pbra{-\Omega\pbra{\frac{\delta^2 pn}{k}}}.
    \]
\end{lemma}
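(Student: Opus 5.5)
This is essentially a restatement of \cite[Theorem 1.1]{GLSS15}, so the plan is to invoke that result and convert its relative-entropy bound into the stated multiplicative form. As I recall it (the exact form should be checked against \cite{GLSS15}), the theorem says the following. Let $\Xb_1,\dots,\Xb_n$ be a read-$k$ family of indicator variables, let $\bar p = \frac{1}{n}\sum_i \E[\Xb_i]$, and let $0 < q < \bar p$. Then
\[
    \Pr\sbra{\sum_i \Xb_i \le qn} \le \exp\pbra{-\frac{D(q\,\|\,\bar p)\, n}{k}},
\]
where $D(q\|p) = q\log\frac{q}{p} + (1-q)\log\frac{1-q}{1-p}$ is the binary KL divergence. (Their proof conditions on the underlying independent variables $\Yb_1,\dots,\Yb_m$ and applies Shearer's lemma to the entropy of the conditioned distribution. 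Each $\Yb_j$ lies in at most $k$ of the dependency sets, which produces the factor $1/k$ relative to the independent case.) I will take this as given.

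First I apply it with $q = (1-\delta)p$. By hypothesis every $\E[\Xb_i] = \Pr[\Xb_i = 1] \ge p$, so $\bar p \ge p > q$ and the theorem applies. Second, I reduce from $\bar p$ to $p$. For fixed $q$, the map $r \mapsto D(q\|r)$ is increasing on $r \in [q,1)$; its derivative is $\frac{r-q}{r(1-r)} \ge 0$. Hence $D(q\|\bar p) \ge D(q\|p)$. The case $\bar p = 1$ is trivial, since then every $\Xb_i = 1$ almost surely.

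Third, I lower bound the divergence using the standard estimate $D(q\|p) \ge \frac{(p-q)^2}{2p}$, valid for $q \le p$, up to the constant factor coming from the choice of logarithm base. One way to see it is to integrate $\partial_r D(q\|r) = \frac{r-q}{r(1-r)} \ge \frac{r-q}{r} \ge \frac{r-q}{p}$ over $r \in [q,p]$. With $p - q = \delta p$ this gives $D(q\|p) \ge \delta^2 p/2$. Substituting, the probability is at most $\exp(-\delta^2 pn/(2k)) = \exp(-\Omega(\delta^2 pn/k))$, as claimed.

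The only real obstacle is matching the exact hypothesis form in \cite{GLSS15}: whether it is stated via the average expectation and with a KL-divergence or an additive exponent. If only the additive form $\exp(-2\eps^2 n/k)$ is available, that bound is too weak when $p$ is small. In that case I would instead redo their Shearer-based argument directly. That means bounding the entropy of the uniform-conditioned product distribution by $\frac{1}{k}\sum_i$ of the entropies of the marginals on the dependency sets, then optimizing as in the usual Chernoff–entropy proof, which yields the KL form.
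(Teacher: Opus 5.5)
Your proposal is correct and matches the paper, which gives no argument beyond citing \cite[Theorem 1.1]{GLSS15}. That theorem is stated with the KL divergence and the average expectation, as you recall. Your monotonicity step from $\bar p$ down to $p$ and the estimate $D(q\|p) \ge (p-q)^2/(2p)$ (with natural logarithms) are exactly the conversion hidden behind the paper's word ``essentially.''
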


\section{Proof of \Cref{thm:main-lb}}\label{sec:main_pf}

In this section, we prove \Cref{thm:main-lb}, restated below for the reader's convenience.

\thmmainlb*

At a high level, the proof reduces to a setting in which many groups of output bits that are forced to be equal under $\Db_{n,\ell}$ do not share a common input cell.
For each such group, this absence of a shared input creates a dichotomy: either the sampler produces unequal bits with noticeable probability, immediately violating the defining constraints of $\Db_{n,\ell}$, or the entire group is biased toward one fixed constant value.
If the latter occurs for many groups, then the sampler places too much mass on outputs where many coordinates simultaneously take prescribed values, an event that is exponentially unlikely under $\Db_{n,\ell}$.
Thus in either case the sampler is far from the target distribution.
We now proceed to the details.

For clarity, we will write $\Ub$ for $\Ub_{[N]}$ and $\Db$ for $\Db_{n,\ell}$.
It will also be convenient to discuss the indices in the output of $f(\Ub^m)$ in a similar fashion to the definition of $\Db$ (in \Cref{def:distD}).
That is, we will abuse notation and often use (say) $\ib^{(1)}$ for the first $\log n$ bits of $f(\Ub^m)$.
It will be evident from context which distribution (i.e., $f(\Ub^m)$ or $\Db$) the random variables $\ib^{(j)}, \xb,\yb^{(j)}$ are being taken from.

Let $t > 0$ be a threshold parameter (eventually taken to be $\Theta(\ell d \log N)$), and let $S \subseteq [m]$ be the set of input symbols to the sampler $f$ that either affect more than $t$ output bits or affect one of the first $\ell \log n$ outputs (corresponding to $\ib^{(1)}, \dots, \ib^{(\ell)}$).
Observe that $s \coloneqq |S| \le  d\ell \log(n) + d(\ell+1)n/t$.
By conditioning on the values that the symbols in $S$ take, we can express $f(\Ub^m)$ as the mixture
\[
    f(\Ub^m) = \sum_{\rho \in [N]^S} \frac{1}{N^{s}}\cdot f(\Ub^{[m] \setminus S}, \rho).
\]

Below, we will write the distribution $f(\Ub^{[m] \setminus S}, \rho)$ more concisely as $\Fb_\rho$.
Observe that any such conditioning $\rho$ fixes the value of the shift vector $\ib(\rho) \coloneqq (\ib^{(1)}(\rho), \dots, \ib^{(\ell)}(\rho))$, where $\ib^{(j)}(\rho)$ is the value of $\ib^{(j)}$ after conditioning on $\rho$, and it guarantees that every unfixed input cell affects at most $t$ output bits.
We will use these properties to show that for certain restrictions $\rho$, the distribution $\Fb_\rho$ is far from $\Db$ conditioned on their shift vectors being equal.
Then, we amalgamate the behavior of these conditional distributions to obtain our final distance bound.

In order for $f(\Ub^m)$ to approximate $\Db$, there must be clusters of $k \coloneqq \ell+1$ output bits which all take the same value.
More precisely, for each shift vector $i \in (\{0,1\}^{\log n})^\ell$ and $u\in [n]$, define the equality block $B_u(i) = (\xb_u, \yb^{(1)}_{u - i^{(1)}}, \dots, \yb^{(\ell)}_{u - i^{(\ell)}})$.
Call such an $i$ \emph{bad} if at least $n/4$ of the equality blocks have some unfixed input cell $c \not\in S$ which affects every bit in the block (and \emph{good} otherwise); if a block has such an input symbol, we say the block is \emph{covered}.

\begin{claim}\label{clm:num_bad_i}
    The fraction of bad $i \in (\{0,1\}^{\log n})^\ell$ is at most $4d\left(\frac{t}{\ell n}\right)^\ell$.
\end{claim}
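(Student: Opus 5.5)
The proof is a double-counting argument on triples $(i,u,c)$, where $i$ is a shift vector, $u\in[n]$, and $c\notin S$ is an unfixed input cell that covers the block $B_u(i)$. I will bound the number of such triples from above in terms of $d$, $n$, $t$, $\ell$. Since each bad $i$ must contribute at least $n/4$ covered blocks, dividing gives the count of bad $i$, and dividing by $n^\ell$ gives the fraction.

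The counting goes in two steps. First, a covering cell $c$ must in particular affect $\xb_u$. Since $f$ is $d$-local, $\xb_u$ depends on at most $d$ cells. So the number of pairs $(u,c)$ with $c$ affecting $\xb_u$ is at most $dn$.

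Second, fix such a pair $(u,c)$ with $c\notin S$. For each $j\in[\ell]$, let $A_j(c)\subseteq[n]$ be the set of indices $k$ such that $c$ affects $\yb^{(j)}_k$, and let $a_j=|A_j(c)|$. The cell $c$ covers $B_u(i)$ only if $u-i^{(j)}\in A_j(c)$ for every $j$. Given $u$, this leaves at most $a_j$ choices for each coordinate $i^{(j)}$, hence at most $\prod_{j=1}^\ell a_j$ shift vectors $i$. Because $c\notin S$, it affects at most $t$ output bits in total, so $\sum_j a_j\le t$. By AM--GM, $\prod_j a_j\le (t/\ell)^\ell$.

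Combining the two steps, the number of covered pairs $(i,u)$ is at most the number of triples $(i,u,c)$, which is at most $dn\,(t/\ell)^\ell$. Each bad $i$ contributes at least $n/4$ covered pairs $(i,u)$. Therefore the number of bad $i$ is at most $4d\,(t/\ell)^\ell$. Dividing by the total number $n^\ell$ of shift vectors gives the claimed fraction $4d\left(\frac{t}{\ell n}\right)^\ell$.

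The only delicate point is using the budget $t$ for the whole cell across all $\ell$ shifts at once. Bounding each $a_j$ by $t$ separately would only give $t^\ell$; applying AM--GM to the shared constraint $\sum_j a_j\le t$ is what recovers the factor $\ell^{-\ell}$.
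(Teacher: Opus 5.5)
Your proposal is correct and is essentially the paper's argument. The paper bounds the expected number of covered blocks over a uniform shift vector by summing over cells $c\notin S$ and $u\in A_{x,c}$, applying AM--GM to $\prod_j |A_{j,c}|/n$, and using $\sum_c |A_{x,c}|\le dn$, then applies Markov's inequality; this is exactly your double count of triples $(i,u,c)$, normalized by $n^\ell$.
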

\begin{proof}
    For each unfixed input cell $c \not\in S$, define $A_{x,c} \subseteq [n]$ to be the set of $\xb$ coordinates affected by $c$.
    Similarly, for all $j\in [\ell]$ define $A_{j,c} \subseteq [n]$ to be the set of $\yb^{(j)}$ coordinates affected by $c$.
    Observe that $\sum_j |A_{j,c}| \le t$, and that a block $B_u(i)$ is covered by a cell $c$ only if $u \in A_{x, c}$ and $u - i^{(j)} \in A_{j,c}$ for all $j$.
    In a uniformly sampled $\ib$, each $\ib^{(j)}$ is chosen independently and uniformly at random from $\bin^{\log n}$, so the probability that $u - \ib^{(j)} \in A_{j,c}$ is $|A_{j,c}| / n$.
    Thus,
    \begin{align*}
        \E_{\ib \in (\{0,1\}^{\log n})^\ell}[\# \text{ blocks covered}] &\le \sum_{c \not\in S}\E_{\ib \in (\{0,1\}^{\log n})^\ell}[\# \text{ blocks covered by $c$}]  \\
        &= \sum_{c \not\in S} \sum_{u \in A_{x,c}} \prod_{j=1}^\ell \frac{|A_{j,c}|}{n} \tag{since $\ib^{(j)}$'s independent} \\
        &\le \sum_{c \not\in S} |A_{x,c}| \pbra{\frac{\sum_j |A_{j,c}|}{n\ell}}^\ell \tag{by AM-GM inequality} \\
        &\le \pbra{\sum_{c \not\in S} |A_{x,c}|} \cdot \pbra{\frac{t}{n\ell}}^\ell  \tag{by def'n of $S$} \\
        &\le dn \pbra{\frac{t}{n\ell}}^\ell. \tag{since $\xb$ affected by $\le dn$ inputs}
    \end{align*}
    Applying Markov's inequality concludes the proof.
\end{proof}

By \Cref{clm:num_bad_i}, we can primarily focus our attention on the case of good $i$'s, where we have many uncovered blocks.
We will show that the lack of correlation within the bits of such a block causes behavior which does not occur in the target distribution.
Below, we view $\Zb$ as the marginal distribution of $\Fb_\rho$ onto an uncovered block.
\begin{claim}\label{clm:Z_classification}
    Let $\Zb$ be the marginal distribution over some $k \ge 2$ output bits.
    If no input cell affects every bit in $\Zb$, then either
    \begin{enumerate}
        \item \label{itm:Znotallequal} $\Pr[\Zb \not\in \{0^k, 1^k\}] \ge 1/(100k)$, or
        \item \label{itm:Zfixed} $\Pr[\Zb = z] \ge 2/3$ for some fixed $z \in \{0^k, 1^k\}$.
    \end{enumerate}
\end{claim}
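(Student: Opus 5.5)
The plan is to prove the dichotomy by assuming the first alternative fails, i.e.\ $\Pr[\Zb \notin \{0^k,1^k\}] < \delta \coloneqq 1/(100k)$, and deducing the second via a hybrid (resampling) argument. Write $\Zb = (Z_1,\dots,Z_k)$ as a function of the independent uniform input cells $\Xb$. The hypothesis says that every input cell $c$ fails to affect at least one of the $k$ bits. I would partition all input cells into classes $C_1,\dots,C_k$, putting $c$ into $C_b$ for some $b$ such that $c$ does not affect $Z_b$ (ties broken arbitrarily). The point of grouping by the missed bit is that it yields only $k$ hybrid steps, rather than one per cell, so a union bound costs only $O(k\delta)$ independent of $m$.

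Next I would let $\Xb'$ be an independent copy of the inputs and define hybrids $\Hb_0 = \Xb$, and $\Hb_b$ = $\Xb$ with the cells in $C_1\cup\dots\cup C_b$ replaced by their values in $\Xb'$. Thus $\Hb_k = \Xb'$. Each $\Hb_b$ is uniformly distributed on the inputs, so $\Pr[\Zb(\Hb_b) \notin\{0^k,1^k\}] < \delta$ for each $b$. By a union bound, with probability at least $1-(k+1)\delta \ge 1 - 2k\delta = 49/50$, every $\Zb(\Hb_b)$ is a constant string. On this event, consecutive hybrids $\Hb_{b-1}$ and $\Hb_b$ differ only on cells of $C_b$, none of which affect $Z_b$. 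Hence $Z_b(\Hb_{b-1}) = Z_b(\Hb_b)$, and since both strings are constant they carry the same constant. Chaining over $b=1,\dots,k$ gives $\Zb(\Xb) = \Zb(\Xb')$, and in particular $Z_1(\Xb)=Z_1(\Xb')$, with probability at least $49/50$.

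Finally, let $p = \Pr[Z_1 = 1]$. By independence of $\Xb$ and $\Xb'$, $p^2 + (1-p)^2 \ge 49/50$, which forces $\max(p,1-p) \ge 49/50$ (since $p^2+(1-p)^2 \le \max(p,1-p)$). Say $p \ge 49/50$. Then $\Pr[\Zb = 1^k] \ge \Pr[Z_1 = 1] - \Pr[\Zb\notin\{0^k,1^k\}] \ge 49/50 - 1/200 \ge 2/3$, and symmetrically for $0^k$. This gives the second alternative.

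The only delicate step is the one anticipated above: avoiding a union bound over all $m$ cells. The partition into $k$ classes by missed bit handles this, and the rest is routine arithmetic with ample slack in the constants $1/(100k)$ and $2/3$.
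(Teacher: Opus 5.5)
Your proposal is correct and follows the paper's proof: the same partition of input cells by a bit each one misses, and the same $k$-step hybrid between two independent inputs. The only difference is the final arithmetic. You union-bound to get $\Pr[Z_1(\Xb)=Z_1(\Xb')] > 49/50$ and then use the collision probability $p^2+(1-p)^2$, whereas the paper bounds $2p_0p_1 \le (k+1)\delta$ and argues by contradiction.
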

\begin{proof}
    Arbitrarily partition the input cells into sets $I_1, \dots, I_k$ where the cells in $I_j$ do not affect the $j$-th output bit of $\Zb$.
    We proceed by a hybrid argument.
    Consider two independent random inputs $\Ab, \Bb$.
    We define the sequence of inputs 
    \[
        \Ab = \Cb^{(0)}, \Cb^{(1)}, \dots, \Cb^{(k)} = \Bb,
    \]
    where $\Cb^{(j)}$ has the input symbols in $I_1 \cup \cdots \cup I_j$ taken from $\Bb$ and the rest from $\Ab$.
    We also define $\Zb^{(j)}$ to be the value of $\Zb$ on input $\Cb^{(j)}$, and observe that each $\Zb^{(j)}$ has the same marginal distribution as $\Zb$.
    Moreover, let $\delta = \Pr[\Zb \not\in \{0^k,1^k\}]$, $p_0 = \Pr[\Zb = 0^k]$, and $p_1 = \Pr[\Zb = 1^k]$.
    
    Whenever $\Zb^{(0)} = 0^k$ and $\Zb^{(k)} = 1^k$ (or vice-versa), there must exist some $\Zb^{(j)} \not\in \{0^k,1^k\}$.
    Indeed, any two consecutive outputs $\Zb^{(j-1)}, \Zb^{(j)}$ agree on their $j$-th bit, since the cells in $I_j$ do not affect it.
    Thus, we cannot transition directly from $0^k$ to $1^k$, so if the sequence contains both strings, then it must also contain some other output.
    This implies $2p_0p_1 \le (k+1)\delta$.
    Assume by contradiction that $\delta < 1/(100k)$ and $p_0, p_1 < 2/3$.
    Then $\min(p_0,p_1) > 1 - \frac{1}{100k} - \frac{2}{3} \ge \frac{1}{4}$, so
    \[
        \frac{1}{8} \le 2\left(\frac{1}{4}\right)^2 < \frac{k+1}{100k} \le \frac{1}{100} + \frac{1}{100k} \le \frac{1}{8} - \frac{11}{100},
    \]
    a contradiction.
\end{proof}

Fix a conditioning $\rho$ such that $\ib(\rho)$ is good.
That is, at least $3n/4$ blocks $B_u(\ib(\rho))$ are uncovered.
We break into cases depending on which conclusion of \Cref{clm:Z_classification} is satisfied by most of these blocks.
Below, let $n' = 3n/8$.

\paragraph{Case 1: Most blocks satisfy (\ref{itm:Znotallequal}).}

Let $B_1, \dots, B_{n'}$ be blocks satisfying (\ref{itm:Znotallequal}).
While they are not fully independent, one would expect their behavior to resemble that of independent random variables, since each unfixed input cell only affects a small number of blocks.\footnote{A greedy construction allows us to find $\Omega(n/dt)$ actually independent blocks, but pursuing this line of analysis gives worse bounds.}
More formally, we have that the indicator random variables $\{\indicator(B_j \not\in \{0^k,1^k\})\}_j$ form a read-$t$ family, so \Cref{lem:read_k_chernoff} implies
\[
    \Pr_{\Fb_\rho}\sbra{\sum_{j=1}^{n'} \indicator(B_j \not\in \{0^k,1^k\}) \ge \frac{n'}{200k}} \ge 1 - \exp\pbra{-\Omega\pbra{\frac{n}{kt}}}.
\]
For comparison, this event never occurs under $\Db$ conditioned on its shift vector being $\ib(\rho)$.

\paragraph{Case 2: Most blocks satisfy (\ref{itm:Zfixed}).}
Let $B_1, \dots, B_{n'}$ be blocks satisfying (\ref{itm:Zfixed}), where their most common values are $b_1, \dots, b_{n'}$, respectively.
Again applying \Cref{lem:read_k_chernoff}, we have
\[
    \Pr_{\Fb_\rho}\sbra{\sum_{j=1}^{n'} \indicator(B_j = b_j) \ge \frac{7n'}{12}} \ge 1 - \exp\pbra{-\Omega\pbra{\frac{n}{t}}}.
\]
For comparison, $\sum_{j=1}^{n'} \indicator(B_j = b_j)$ is distributed like the binomial $\Bin(n',1/2)$ under $\Db$ conditioned on its shift vector being $\ib(\rho)$, so this event occurs with probability at most $e^{-\Omega(n)}$ by \Cref{lem:chernoff}.
\newline

\noindent 
In either case, there exists an event $\Ecal_\rho$ which occurs with probability at least $1 - e^{-\Omega\pbra{\frac{n}{kt}}}$ under $\Fb_\rho$, but probability at most $e^{-\Omega(n)}$ under $\Db$ conditioned on its shift vector.
We now define the global event $\Ecal$ witnessing the TV distance between $f(\Ub^m)$ and $\Db$ to be 
\[
    \Ecal = \{\text{bad } \ib \} \cup \bigcup_{\rho \: : \: \ib(\rho) \text{ is good}} (\ib = \ib(\rho) \text{ and } \Ecal_\rho).
\]
Under $f(\Ub^m)$, each conditioning $\rho \in [N]^S$ produces a shift vector $\ib(\rho)$.
If $\ib(\rho)$ is bad, then the output automatically lies in $\Ecal$, and if $\ib(\rho)$ is good, then we have already shown it lies in $\Ecal$ with probability at least $1 - e^{-\Omega\pbra{\frac{n}{kt}}}$.
To analyze the probability under $\Db$, we apply a union bound to find that
\begin{align*}
    \Pr_\Db[\Ecal] &\le \Pr_\Db[\ib \text{ is bad}] + \sum_{\rho \: : \: \ib(\rho) \text{ is good}} \Pr_\Db\sbra{\Ecal_\rho \mid \ib = \ib(\rho)} \\
    &\le 4d\left(\frac{t}{\ell n}\right)^\ell + N^s \exp(-\Omega(n)). \tag{by \Cref{clm:num_bad_i}}
\end{align*}
Hence,
\[
    \tvdist{f(\Ub^m) - \Db} \ge 1 - \exp\pbra{-\Omega\pbra{\frac{n}{(\ell+1)t}}} - 4d\left(\frac{t}{\ell n}\right)^\ell - N^s \exp(-\Omega(n)).
\]
Setting $t = \Theta(\ell d \log N)$ with a sufficiently large implicit constant gives $s \le d\ell \log(n) + O(n/\log N)$, where $O(n/\log N)$ has a sufficiently small implicit constant, so we find the distance between $f(\Ub^m)$ and $\Db$ is at least
\[
    1 - \exp\pbra{-\Omega\pbra{\frac{n}{\ell^2 d \log N}}} - d\left(\frac{\Theta(d\log N)}{n}\right)^\ell - \exp\big(d\ell \log(n)\log(N) -\Omega(n)\big).
\]
This concludes the proof of \Cref{thm:main-lb}. 

\section{Acknowledgments}

We are grateful to anonymous reviewers for helpful feedback on an earlier version of this manuscript.

\bibliographystyle{alpha}
\bibliography{ref}

\end{document}